\newcommand*{\email}[1]{\texttt{#1}}
\newtheorem{theorem}{Theorem}[section]
\newtheorem{lemma}{Lemma}[section]
\newtheorem{corollary}{Corollary}[section]
\newtheorem{proposition}{Proposition}[section]
\newtheorem{claim}{Claim}[section]
\newtheorem{remark}{Remark}[section]
\def\final{0}  
\def\iflong{\iffalse}
\newcommand{\cnote}[1]{{\color{blue}[{\tiny Calvin: \bf #1}]\marginpar{*}}}
\newcommand{\wnote}[1]{{\color{cyan}[{\tiny Weihang: \bf #1}]\marginpar{*}}}
\newcommand{\knote}[1]{{\color{red}[{\tiny Karthik: \bf #1}]\marginpar{\color{red}*}}}
\newcommand{\todo}[1]{{\color{red}[{\tiny TODO: \bf #1}]\marginpar{\color{red}*}}}
\newcommand{\cnote}[1]{}
\newcommand{\wnote}[1]{}
\newcommand{\knote}[1]{}
\newcommand{\todo}[1]{}
\newcommand{\R}{\mathbb{R}}
\newcommand{\deltacard}{d}
\def\complement#1{\overline{#1}}
\def\set#1{\{ #1 \}}
\title{Approximate minimum cuts and their enumeration\footnote{University of Illinois, Urbana-Champaign. Email: \email{\{calvinb2, karthe, weihang3\}@illinois.edu}. Supported in part by NSF grants CCF-1814613 and  CCF-1907937.}
}
\author{
Calvin Beideman \and 
Karthekeyan Chandrasekaran \and 
Weihang Wang
}
\date{}
\begin{document}
\maketitle

\begin{abstract}
We show that every $\alpha$-approximate minimum cut in a connected graph is the \emph{unique} minimum $(S,T)$-terminal cut for some subsets $S$ and $T$ of vertices each of size at most $\lfloor2\alpha\rfloor+1$. This leads to an alternative proof that the number of $\alpha$-approximate minimum cuts in a $n$-vertex connected graph is $n^{O(\alpha)}$ and they can all be enumerated in deterministic polynomial time for constant $\alpha$. 
\end{abstract}

\section{Introduction}

Let $G=(V,E)$ be a graph with positive edge costs $c:E\rightarrow \R_+$. 
A partitioning of the vertex set into $2$ non-empty parts is known as a \emph{cut}. 
For a non-empty proper subset $U$ of vertices,
we will use 
$\complement{U}$ to denote $V\setminus U$, 
$(U, \complement{U})$ to denote the cut associated with $U$, 
$\delta(U)$
to denote the set of edges crossing the cut $(U, \complement{U})$,  
and $\deltacard(U):=\sum_{e\in \delta(U)}c(e)$ to denote the cut value of $U$. 
We will use $\lambda$ to denote the value of a minimum cut in $G$, i.e., 
\[
\lambda:=\min\{d(U):\emptyset\neq U\subsetneq V\}.
\]
For $\alpha\geq 1$, a cut $(U,\complement{U})$ is an \emph{$\alpha$-approximate minimum cut} if $d(U)\leq\alpha\lambda$. 
A fundamental graph structural result states that the number of $\alpha$-approximate minimum cuts in a $n$-vertex connected graph is $O(n^{2\alpha})$. This structural result forms the backbone of several algorithmic and representation results in graphs---e.g., fast randomized construction of skeletons \cite{Karger00}, existence of cut sparsifiers which in turn finds applications in streaming and sketching \cite{AG09, AGM12, AGM12-pods, KK15},  approximation algorithms for TSP \cite{Zen19, Ben95, BG08, Benthesis}, computing reliability of probabilistic networks \cite{Kar99}, and polynomial time algorithms for connectivity augmentation \cite{NGM97}. 
The structural result along with a randomized polynomial-time algorithm to enumerate all constant-approximate minimum cuts was first shown via Karger's random contraction technique \cite{KS96}. 
Subsequently, the splitting-off technique and the tree packing technique have been used to show the structural result along with a \emph{deterministic} polynomial-time algorithm to enumerate all constant-approximate minimum cuts \cite{NNI97, CQX20}. 


In this work, we give a fourth technique to bound the number of $\alpha$-approximate minimum cuts by $n^{O(\alpha)}$ along with a deterministic polynomial-time algorithm to enumerate all constant-approximate minimum cuts. 
Let $S$, $T$ be
disjoint non-empty subsets of vertices. A $2$-partition $(U, \complement{U})$ is
an $(S,T)$-terminal cut if $S\subseteq U\subseteq V\setminus T$. Here,
the set $U$ is known as the \emph{source set} and the set $\complement{U}$ is
known as the \emph{sink set}.  An $(S,T)$-terminal cut with minimum cut value will be denoted 
as a \emph{minimum $(S,T)$-terminal cut}. The following is our main result. 

\begin{restatable}{theorem}{thmStructure}
\label{thm: main}
Let $G=(V,E)$ be a connected graph with positive edge costs and 
$(U, \complement{U})$ 
be an $\alpha$-approximate minimum cut for some $\alpha \ge 1$.
Then, there exist subsets $S,T\subseteq V$ with $|S|,|T|\leq \lfloor 2\alpha\rfloor+1$ such that $(U,\complement{U})$ is the unique minimum $(S,T)$-terminal cut.
\end{restatable}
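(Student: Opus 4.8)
The plan is to fix the orientation of the cut and reduce uniqueness to a hitting-set condition that can be certified by a short weight argument. Since we may choose which side carries $S$ and which carries $T$, I would look for $S\subseteq U$ and $T\subseteq\complement{U}$ (automatically disjoint), so that $(U,\complement{U})$ is itself a feasible $(S,T)$-terminal cut, and then try to rule out every \emph{competitor}: a source set $W$ with $S\subseteq W\subseteq V\setminus T$, $W\ne U$, and $d(W)\le d(U)$. The key observation is that a competitor $W$ is destroyed precisely when $S\cap(U\setminus W)\ne\emptyset$ or $T\cap(W\setminus U)\ne\emptyset$, because $S\subseteq U$ and $T\subseteq\complement{U}$ force $S\not\subseteq W\iff S\cap(U\setminus W)\ne\emptyset$ and $T\not\subseteq\complement{W}\iff T\cap(W\setminus U)\ne\emptyset$. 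So the statement reduces to choosing small $S$ and $T$ that together destroy all competitors.

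The first real step is an uncrossing argument to dispose of competitors that cross $U$. Given a feasible competitor $W$, submodularity of the cut function gives $d(U\cap W)+d(U\cup W)\le d(U)+d(W)\le 2\,d(U)$, so at least one of $U\cap W$, $U\cup W$ has value at most $d(U)$. If $d(U\cap W)\le d(U)$, then (as $S\subseteq U\cap W$) the set $U\cap W$ is a nonempty proper subset of $U$ of value $\le d(U)$ containing $S$; if instead $d(U\cup W)\le d(U)$, then $\complement{U\cup W}=\complement{U}\cap\complement{W}$ is a nonempty proper subset of $\complement{U}$ of value $\le d(U)$ containing $T$. Hence it suffices to enforce two \emph{nested} conditions: $S$ lies in no nonempty proper subset $W\subsetneq U$ with $d(W)\le d(U)$, and symmetrically $T$ lies in no nonempty proper subset $W'\subsetneq\complement{U}$ with $d(W')\le d(U)$ (recall $d(\complement{U})=d(U)$, so the two sides share the same bound). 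These conditions kill the remaining nested competitors directly, and by the uncrossing they kill every crossing competitor as well.

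The second step is the weight bound that makes the nested conditions cheap to satisfy. For $W\subsetneq U$ with $d(W)\le d(U)$, write $w(A,B)$ for the total cost of edges between $A$ and $B$, and set $b=w(W,U\setminus W)$ and $c=w(U\setminus W,\complement{U})$. A short edge count gives $c-b=d(U)-d(W)\ge 0$ and $b+c=d(U\setminus W)\ge\lambda$, whence $c\ge\lambda/2$. Thus every region $U\setminus W$ that $S$ must hit carries at least $\lambda/2$ of the cut $\delta(U)$, and likewise on the $\complement{U}$ side. The plan is to play this off against the budget $d(U)\le\alpha\lambda$: since each set to be hit consumes $\ge\lambda/2$ of a total of $\le\alpha\lambda$ boundary weight, only about $2\alpha$ ``essentially different'' such sets can occur, and picking one vertex from each should yield $S$.

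The hard part is converting this into an \emph{integral} bound of $\lfloor2\alpha\rfloor+1$. The clean form of the argument only produces a \emph{fractional} hitting set of size $\le 2\alpha$: weighting each forbidden subset by a dual variable and charging through $c\ge\lambda/2$ against $\sum_{v\in U}w(v,\complement{U})=d(U)\le\alpha\lambda$ bounds the fractional optimum, but the regions $U\setminus W$ may overlap, so a careless selection could pick many low-weight vertices. I expect to tame the overlaps by uncrossing again: restricting to the inclusion-maximal forbidden subsets $W_1,\dots,W_m$ and using that a cheap $W_i\cup W_j\subsetneq U$ would contradict maximality, so that either $W_i\cup W_j=U$ (making the complements $U\setminus W_i$, $U\setminus W_j$ disjoint) or $W_i\cap W_j$ is again forbidden; a laminar/greedy extraction should then expose a sub-family whose complements carry pairwise-disjoint boundary weight $\ge\lambda/2$, forcing $m\le\lfloor2\alpha\rfloor$. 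Taking one vertex from each such complement, plus one extra vertex to guarantee $S\ne\emptyset$ (needed so the uncrossing step has $U\cap W\ne\emptyset$ and $U\cup W\ne V$), gives $|S|\le\lfloor2\alpha\rfloor+1$; running the identical construction on $\complement{U}$ produces $T$, and together they certify that $(U,\complement{U})$ is the unique minimum $(S,T)$-terminal cut.
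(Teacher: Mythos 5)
Your overall architecture matches the paper's: you reduce to two one-sided conditions ($S$ must meet $U\setminus W$ for every nonempty proper $W\subsetneq U$ with $d(W)\le d(U)$, and symmetrically for $T$ inside $\complement{U}$), and you recover the two-sided uniqueness by uncrossing a competitor against $U$ via submodularity --- that recombination is essentially the paper's final step and is fine. Your weight observation is also correct: with $b=w(W,U\setminus W)$ and $c=w(U\setminus W,\complement{U})$ one has $c-b=d(U)-d(W)\ge 0$ and $b+c=d(U\setminus W)\ge\lambda$, so $c\ge\lambda/2$.

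The genuine gap is exactly where you flag ``the hard part,'' and your proposed fix does not close it. Uncrossing two inclusion-maximal forbidden sets $W_i,W_j$ only gives: either $W_i\cup W_j=U$ (regions disjoint), or $d(W_i\cup W_j)\le d(U)$ with $W_i\cup W_j\subsetneq U$ (contradicts maximality), or $d(W_i\cap W_j)\le d(U)<d(W_i\cup W_j)$ --- and in this last case the regions $U\setminus W_i$, $U\setminus W_j$ overlap and nothing contradicts maximality, so pairwise submodularity does not force disjointness. Worse, even if you could greedily extract a pairwise-disjoint subfamily of regions, that would bound only the size of that subfamily, not $m$ and not the transversal number: every discarded inclusion-minimal region still has to be hit by $S$. (A family of many minimal regions all sharing one vertex of boundary weight $\lambda/2$ is consistent with everything you have established; it admits a size-one transversal but defeats any disjoint-charging count of $m$.) This is precisely the obstruction the paper points out when explaining why the Goemans--Ramakrishnan pairwise/triple submodular uncrossing does not generalize, and it is overcome by a different mechanism: take a \emph{minimal} transversal $S=\{u_1,\dots,u_p\}$ of the forbidden family directly (so each $u_i$ has a private forbidden set), and if $p\ge\lfloor 2\alpha\rfloor+2$, apply the $p$-way uncrossing lemma of Kamidoi--Yoshida--Nagamochi to the source-minimal minimum $(S-u_i,\complement{U})$-terminal cuts, which produces $p+1$ pairwise-disjoint nonempty parts of total cut value at most $2d(U)\le 2\alpha\lambda$, forcing some part to have value below $\lambda$ --- a contradiction. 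Some such simultaneous multi-set uncrossing (rather than a pairwise or greedy one) appears to be essential, and your argument is incomplete without it.
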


In other words, every $\alpha$-approximate minimum cut $(U, \complement{U})$ in a connected graph can be recovered as the \emph{unique} minimum $(S, T)$-terminal cut for some subsets $S$ and $T$ of sizes at most $\lfloor 2\alpha\rfloor+1$. 
A few remarks are in order. 

\begin{remark}
An immediate consequence of Theorem \ref{thm: main} is that the number of distinct $\alpha$-approximate minimum cuts in a $n$-vertex connected graph is at most $n^{4\alpha+2}$ and they can all be enumerated using $n^{4\alpha+2}$ minimum $(S, T)$-terminal cut computations. We recall that minimum $(S, T)$-terminal cut in a given graph with given subsets $S$ and $T$ of vertices can be computed in deterministic polynomial time. 
\end{remark}

\begin{remark}
The size bound for subsets $S$ and $T$ in Theorem \ref{thm: main} is tight up to an additive factor of one: Consider the $n$-vertex cycle graph $G$ with $\alpha$ being a positive integer where $n\ge 4\alpha$ and with all edge costs being $1$. Let $(U,\complement{U})$ be  
a cut in $G$ with $d(U)=2\alpha$ and moreover, each connected component in $G-\delta(U)$ has at least $2$ vertices (see Figure \ref{fig:tight example} for such an example where $\alpha=4$). For such a cut $(U, \complement{U})$, the only choice of $S$ and $T$ for which 
$(U,\complement{U})$ is the \emph{unique} minimum $(S,T)$-terminal cut is given by
\begin{align*}
    S&:=\{v\in U:v\text{ is an end vertex of an edge in }\delta(U)\} \text{ and}
    \\T&:=\{v\in \complement{U}:v\text{ is an end vertex of an edge in }\delta(U)\}.
\end{align*}
Hence, this example shows that both $S$ and $T$ need to have at least 
$2\alpha$ vertices in order to recover $(U, \complement{U})$ as the \emph{unique} minimum $(S,T)$-terminal cut. 
\end{remark}

\begin{figure}[H]
    \centering
    \includegraphics[width=0.4\textwidth]{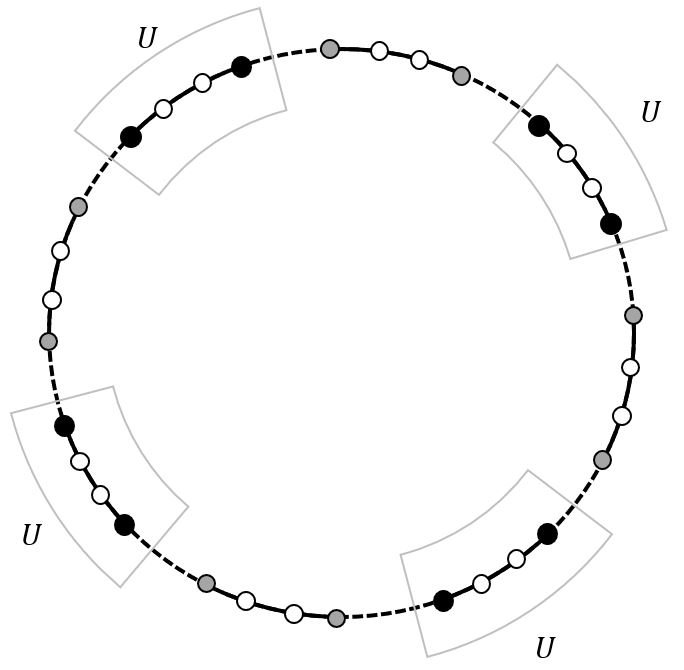}
    \caption{An example showing the tightness of Theorem \ref{thm: main}: The set $U$ consists of vertices within all gray boxes and the set $\complement{U}$ consists of all vertices outside gray boxes; the dotted edges correspond to the edges in the cut set $\delta(U)$. The only choice of $S$ and $T$ for which $(U, \complement{U})$ is the unique minimum $(S, T)$-terminal cut is as follows: $S$ consists of all vertices that are shaded black and $T$ consists of all vertices that are shaded gray.}
    \label{fig:tight example}
\end{figure}

Theorem \ref{thm: main} is inspired by a result of Goemans and Ramakrishnan \cite{GR95}. They showed the following structural theorem for $(4/3-\epsilon)$-approximate minimum cuts: fix an arbitrary vertex $t$ in a  connected graph $G=(V,E)$ and let $(U, \complement {U})$ be a $(4/3-\epsilon)$-approximate minimum cut with $t\in \complement{U}$ for some $\epsilon >0$. Then, there exists a subset $S$ of size at most $2$ such that $(U, \complement{U})$ is the unique minimum $(S,\set{t})$-terminal cut. 
Their result leads to the following natural question: for every $\alpha \ge 1$, can every $\alpha$-approximate minimum cut $(U, \complement{U})$ be obtained as a minimum $(S, T)$-terminal cut for some subsets $S$ and $T$ of vertices each of size at most some function of $\alpha$? 
Our Theorem \ref{thm: main} answers this question affirmatively. However, our proof of Theorem \ref{thm: main} deviates significantly from the proof approach of Goemans-Ramakrishnan's structural result for $(4/3-\epsilon)$-approximate minimum cuts. 
Their proof proceeds via contradiction and relies on the \emph{submodular triple inequality} which holds for the graph cut function. The submodular triple inequality shows that $3$ non-empty sets satisfying certain crossing properties can be uncrossed to obtain $4$ disjoint non-empty sets without increasing the sum of their cut values; thus, if all $3$ sets have cut value less than $(4/3)\lambda$, then one of the $4$ sets should have cut value strictly smaller than $\lambda$, a contradiction to the definition of $\lambda$. However, there does not seem to be a generalization of the submodular triple inequality to larger number of sets. Our proof of Theorem \ref{thm: main} circumvents the submodular triple inequality but achieves its intended purpose in this context---we show that a sufficiently large number of $\alpha$-approximate minimum cuts with certain crossing properties can be uncrossed to obtain a cut of value cheaper than $\lambda$, thus leading to a contradiction again. 


\subsection{Preliminaries}



Let $G=(V,E)$ be a graph with positive edge costs $c:E\rightarrow \R_+$. 
For disjoint non-empty subsets $S,T\subseteq V$, there can be multiple minimum $(S,T)$-terminal cuts. We will be interested in \emph{source
  minimal} minimum $(S,T)$-terminal cuts. A minimum $(S,T)$-terminal cut $(U,\complement{U})$ is a source
  minimal minimum $(S,T)$-terminal cut if for all minimum $(S,T)$-terminal cut $(X,\complement{X})$, we have $U\subseteq X$.
For every pair of disjoint non-empty subsets $S$ and $T$ of vertices, there exists a unique source minimal minimum $(S,T)$-terminal cut (this is due to submodularity---e.g., see \cite{GR95}).

We recall that the graph cut function $d:2^V\rightarrow \R$ is symmetric and submodular, i.e., $d(A) = d(V\setminus A)$ and $d(A) + d(B) \ge d(A\cap B) + d(A\cup B)$ for all $A, B\subseteq V$. We will need an uncrossing result that relies on more careful counting of edges than simply employing the submodularity inequality. 
We begin with some notation that will help in such careful counting.
Let $(Y_1, \ldots, Y_p, W, Z)$
be a partition of $V$ into $p+2$ non-empty parts. 
We define 
\begin{align*}
    \sigma(Y_1, \ldots, Y_p, W, Z)&:=2\left(\sum_{uv\in E:\ u\in Y_i,\ v\in Y_j\text{ for distinct }i,\ j\in[p]} c(uv)+\sum_{uv\in E:\ u\in W,\ v\in Z}c(uv)\right)\\
    &\quad \quad \quad \quad \quad \quad \quad \quad +\sum_{uv\in E:\ u\in \cup_{i\in[p]}Y_i,\ v\in W\cup Z}c(uv)
\end{align*}
We note that
\begin{align}
    \sum_{i\in[p]}d(Y_i) &\le \sigma(Y_1, \ldots, Y_p, W, Z). \label{eq:sigma-lower-bound}
\end{align}
The following lemma uncrosses a collection of $p$ sets to obtain a partition of the vertex set into $p+2$ non-empty parts with small $\sigma$-value. 
The lemma was shown by Kamidoi, Yoshida, and Nagamochi in the context of minimum $k$-cuts \cite{KYN07}. It was also extended to hypergraphs by Chandrasekaran and Chekuri \cite{CC20}. See Figure \ref{figure:uncrossing} for an illustration of the sets that appear in the statement of the lemma. 
\begin{lemma}\label{lemma:uncrossing} \cite{KYN07, CC20}
  Let $G=(V,E)$ be a graph with positive edge costs and
  $\emptyset\neq R\subsetneq U\subsetneq V$. Let
  $S=\{u_1,\ldots, u_p\}\subseteq U\setminus R$ for $p\ge 2$. For each $i\in [p]$, let
  $(\complement{A_i}, A_i)$ be a minimum
  $((S\cup R)\setminus \set{u_i}, \complement{U})$-terminal
  cut. Suppose that
  $u_i\in A_i\setminus (\cup_{j\in [p]\setminus \set{i}}A_j)$ for
  every $i\in [p]$.  Let
\[
Z:= \cap_{i=1}^p \complement{A_i},\ W:= \cup_{1\le i<j\le p}(A_i \cap A_j),\ \text{and}\ Y_i:=A_i-W\ \forall i\in [p].
\]
Then, $(Y_1, \ldots, Y_p, W, Z)$ is a partition of $V$ into $p+2$ non-empty parts with
\[
\sigma(Y_1, \ldots, Y_p, W, Z) \le \min\{\deltacard(A_i) + \deltacard(A_j): i, j\in [p], i\neq j\}.
\]
\end{lemma}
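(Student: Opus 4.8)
The plan is to prove Lemma~\ref{lemma:uncrossing} by verifying the partition claim and then bounding $\sigma$ via a careful edge-counting argument. First I would establish that $(Y_1,\ldots,Y_p,W,Z)$ is genuinely a partition of $V$ into $p+2$ \emph{non-empty} parts. The sets $A_1,\ldots,A_p$ all contain $\complement{U}$ (being sink sets of $(S\cup R)\setminus\{u_i\}$-to-$\complement{U}$ cuts, their source sets lie inside $U$, so $\complement{A_i}\supseteq\complement{U}$ is wrong---rather, each $A_i$ contains $\complement{U}$), so their union covers $\complement{U}$, and the hypothesis $u_i\in A_i\setminus(\cup_{j\ne i}A_j)$ forces each $Y_i\ni u_i$ to be non-empty and pairwise disjoint. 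Non-emptiness of $W$ should follow from the fact that all the $A_i$ share the common sink $\complement{U}$, so any vertex of $\complement{U}$ lies in at least two of them (indeed in all of them), placing it in $W$; non-emptiness of $Z$ follows because $R\subseteq\complement{A_i}$ for every $i$ (the set $R$ is on the source side of each terminal cut), so $\emptyset\ne R\subseteq Z$.

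Next I would set up the edge-counting. Fix two distinct indices, say $i$ and $j$, achieving the minimum on the right-hand side. The key idea is to decompose the cut value $\deltacard(A_i)+\deltacard(A_j)$ edge-by-edge according to which of the parts $Y_1,\ldots,Y_p,W,Z$ the two endpoints of each edge fall into, and to compare this against the contribution of the same edge to $\sigma(Y_1,\ldots,Y_p,W,Z)$. By the definition of $\sigma$, an edge contributes with a coefficient of $2$ if it runs between two distinct $Y$-parts or between $W$ and $Z$, with coefficient $1$ if it runs from some $Y_\ell$ to $W\cup Z$, and with coefficient $0$ otherwise (edges inside a single part, and edges between $W$ and a $Y$ are already counted, etc.). The task is to show that for every edge $e$, its coefficient in $\sigma$ is at most its combined coefficient in $\deltacard(A_i)+\deltacard(A_j)$.

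The heart of the argument, and the step I expect to be the main obstacle, is this per-edge accounting. The subtlety is that $\deltacard(A_i)+\deltacard(A_j)$ charges only the two specific cuts $A_i,A_j$, whereas $\sigma$ sums contributions across all $p$ parts, so a naive bound would overcount. The resolution must exploit the optimality of the terminal cuts: because $(\complement{A_i},A_i)$ is a \emph{minimum} $((S\cup R)\setminus\{u_i\},\complement{U})$-terminal cut, one can replace $A_i$ by the potentially larger or smaller set $A_i\cup A_j$ or $A_i\cap A_j$ without decreasing its cut value, which via submodularity $\deltacard(A_i)+\deltacard(A_j)\ge \deltacard(A_i\cap A_j)+\deltacard(A_i\cup A_j)$ lets me replace the troublesome $A_i,A_j$ by their intersection and union while preserving the terminal constraints (checking that $A_i\cap A_j$ and $A_i\cup A_j$ still separate the appropriate terminals). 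I would then argue that after this uncrossing the only edges that can appear in $\sigma$ with a positive coefficient are precisely the edges already accounted for in the two chosen cuts, handling separately the edges internal to $W$, the edges between distinct $Y_\ell$'s with $\ell\notin\{i,j\}$, and the edges touching $Z$.

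Once the per-edge inequality is verified for every edge of $G$, summing over all edges yields $\sigma(Y_1,\ldots,Y_p,W,Z)\le \deltacard(A_i)+\deltacard(A_j)$ for the minimizing pair, which is exactly the claimed bound. I would conclude by remarking that inequality~\eqref{eq:sigma-lower-bound} then guarantees $\sum_{\ell\in[p]}\deltacard(Y_\ell)\le\sigma$, which is the form in which the lemma is subsequently applied.
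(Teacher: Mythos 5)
First, a point of reference: the paper does not prove this lemma at all---it is imported from \cite{KYN07, CC20}, and the authors explicitly remark that the known proofs proceed \emph{by induction on $p$}, with the quantity $\sigma$ (rather than $\sum_{i\in[p]}\deltacard(Y_i)$) appearing on the left-hand side precisely because it is the invariant that survives the inductive step. Your verification that $(Y_1,\ldots,Y_p,W,Z)$ is a partition into non-empty parts is correct: $\complement{U}\subseteq A_i$ for every $i$ forces $\complement{U}\subseteq W$ (using $p\ge 2$), $R\subseteq\complement{A_i}$ for every $i$ forces $\emptyset\ne R\subseteq Z$, and the hypothesis $u_i\in A_i\setminus(\cup_{j\ne i}A_j)$ puts $u_i\in Y_i$; every vertex lies in zero, one, or at least two of the $A_i$ and lands in $Z$, some $Y_i$, or $W$ accordingly. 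That part is fine, modulo the garbled self-correction about which side of each terminal cut contains $\complement{U}$.

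The main inequality, however, is where your plan breaks, and it breaks at exactly the step you flag as the main obstacle. The per-edge inequality you propose to verify---that every edge's coefficient in $\sigma$ is at most its combined coefficient in $\deltacard(A_i)+\deltacard(A_j)$ for a single fixed pair $i\ne j$---is false. Take $p\ge 4$, fix the pair $\set{i,j}$, and consider an edge between $Y_\ell$ and $Y_{\ell'}$ with $\ell,\ell'\notin\set{i,j}$. Since $Y_\ell=A_\ell-W$ is disjoint from $A_m$ for every $m\ne\ell$, both endpoints of such an edge lie in $\complement{A_i}\cap\complement{A_j}$, so the edge contributes $0$ to $\deltacard(A_i)+\deltacard(A_j)$ yet carries coefficient $2$ in $\sigma$. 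Uncrossing only $A_i$ and $A_j$ via submodularity, as you suggest, does nothing to pay for these edges; their cost must be charged against the minimality of the \emph{other} cuts $A_\ell$, which is what the induction on $p$ in \cite{KYN07, CC20} accomplishes (roughly: apply the statement to $p-1$ of the sets to get a partition with small $\sigma$-value, then uncross that partition with the remaining $A_p$, the definition of $\sigma$ being exactly what makes this step telescope). You do name ``edges between distinct $Y_\ell$'s with $\ell\notin\{i,j\}$'' as a case to be handled separately, but you supply no mechanism for it, and no per-edge comparison against two cuts alone can supply one. So the proposal establishes the partition claim but not the inequality, which is the entire content of the lemma.
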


\begin{figure}[htb]
\centering
\includegraphics[width=0.6\textwidth]{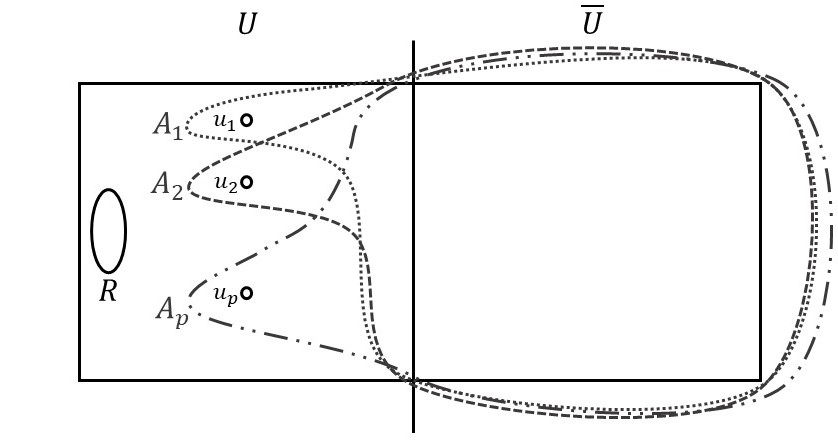}
\caption{Illustration of the sets that appear in the statement of Lemma  \ref{lemma:uncrossing}.} 
\label{figure:uncrossing}
\end{figure}

We briefly remark on the lemma: known proofs of the lemma are by induction on $p$ with an uncrossing argument that involves careful edge counting. The use of $\sigma(Y_1, \ldots, Y_p, W, Z)$ instead of $\sum_{i\in [p]}d(Y_i)$ on the LHS is crucial for the induction argument---this is why we prefer to state the lemma as giving an upper bound on $\sigma(Y_1, \ldots, Y_p, W, Z)$ although we will use it only to conclude an upper bound on $\sum_{i\in [p]}d(Y_i)$ via inequality \eqref{eq:sigma-lower-bound}. 

\section{Proof of Theorem \ref{thm: main}}

We will derive Theorem \ref{thm: main} from the following theorem.

\begin{theorem}\label{thm: SU mincut}
Let $G=(V,E)$ be a connected graph with positive edge costs and $(U, \complement{U})$ be an $\alpha$-approximate minimum cut for some $\alpha \ge 1$.
Then, there exists a subset $S\subseteq U$ with $|S|\leq \lfloor 2\alpha\rfloor+1$ such that $(U,\complement{U})$ is the unique minimum $(S,\complement{U})$-terminal cut.
\end{theorem}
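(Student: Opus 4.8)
The plan is to recast the desired conclusion in terms of source-minimal minimum terminal cuts and then extract the size bound from a single application of the uncrossing Lemma~\ref{lemma:uncrossing}. Observe first that a $(S,\complement{U})$-terminal cut is a pair $(X,\complement{X})$ with $S\subseteq X\subseteq V\setminus\complement{U}=U$, so every feasible source set is contained in $U$, and $U$ itself is feasible whenever $S\subseteq U$. Consequently the source-minimal minimum $(S,\complement{U})$-terminal cut equals $U$ if and only if $U$ is the \emph{unique} minimum $(S,\complement{U})$-terminal cut: if the source-minimal cut is $U$, then every minimum cut $X$ satisfies $U\subseteq X\subseteq U$, forcing $X=U$. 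Thus it suffices to produce $S\subseteq U$ with $|S|\le\lfloor 2\alpha\rfloor+1$ whose source-minimal minimum $(S,\complement{U})$-terminal cut is $U$. The case $|U|=1$ is immediate by taking $S=U$, so I assume $|U|\ge 2$.

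Next I fix an arbitrary base vertex $r\in U$, and among all sets $S$ with $\set{r}\subseteq S\subseteq U$ whose source-minimal minimum $(S,\complement{U})$-terminal cut is $U$, I choose one that is minimal with respect to deletion of elements other than $r$; such an $S$ exists because $S=U$ has this property. Write $S=\set{r}\cup\set{u_1,\ldots,u_p}$. If $p\le 1$ then $|S|\le 2\le\lfloor 2\alpha\rfloor+1$ and we are done, so assume $p\ge 2$. For each $i\in[p]$, let $(B_i,\complement{B_i})$ be the source-minimal minimum $(S\setminus\set{u_i},\complement{U})$-terminal cut; this is precisely the cut appearing in Lemma~\ref{lemma:uncrossing} when its base set is $R:=\set{r}$ and its source family is $\set{u_1,\ldots,u_p}=S\setminus\set{r}$, since then $(\{u_1,\ldots,u_p\}\cup R)\setminus\set{u_i}=S\setminus\set{u_i}$. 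Note $B_i\subseteq U$, and $B_i\subsetneq U$ by minimality of $S$ (deleting $u_i$ destroys the defining property, so the corresponding source-minimal cut is a proper subset of $U$).

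The crux is to verify the crossing hypotheses of Lemma~\ref{lemma:uncrossing} with $A_i:=\complement{B_i}$, so that $(\complement{A_i},A_i)=(B_i,\complement{B_i})$, namely $u_i\in A_i\setminus(\cup_{j\ne i}A_j)$. The inclusions $u_i\notin A_j$ for $j\ne i$ are automatic: the source set $B_j$ must contain its prescribed source $S\setminus\set{u_j}$, which contains $u_i$, so $u_i\in B_j=\complement{A_j}$. The essential point is $u_i\in A_i$, i.e.\ $u_i\notin B_i$. Suppose instead $u_i\in B_i$; then $B_i\supseteq (S\setminus\set{u_i})\cup\set{u_i}=S$, so $B_i$ is a feasible $(S,\complement{U})$-terminal cut. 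Moreover $d(B_i)\le d(U)$, because a larger prescribed source set can only increase the minimum terminal-cut value and the minimum $(S,\complement{U})$-terminal cut value is $d(U)$. Hence $B_i$ is a minimum $(S,\complement{U})$-terminal cut, which by uniqueness forces $B_i=U$, contradicting $B_i\subsetneq U$. This establishes $u_i\notin B_i$ and completes the verification of the hypotheses. I expect this verification to be the main obstacle: it requires correctly separating the two roles the lemma assigns to source vertices (the fixed base $R=\set{r}$ versus the minimal family $\set{u_1,\ldots,u_p}$) and invoking the uniqueness characterization to argue that deleting $u_i$ genuinely moves it to the sink side.

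Finally, Lemma~\ref{lemma:uncrossing} yields a partition $(Y_1,\ldots,Y_p,W,Z)$ of $V$ into $p+2$ nonempty parts with $\sigma(Y_1,\ldots,Y_p,W,Z)\le\min\set{d(A_i)+d(A_j):i,j\in[p],\,i\ne j}$. Since $U$ is a feasible source set for each terminal-cut problem defining $B_i$, symmetry of the cut function gives $d(A_i)=d(B_i)\le d(U)\le\alpha\lambda$ for every $i$, so the right-hand side is at most $2\alpha\lambda$. On the other hand, each $Y_i$ is a nonempty proper subset of $V$ (there are $p+2\ge 4$ parts), whence $d(Y_i)\ge\lambda$, and inequality~\eqref{eq:sigma-lower-bound} gives $p\lambda\le\sum_{i\in[p]}d(Y_i)\le\sigma(Y_1,\ldots,Y_p,W,Z)\le 2\alpha\lambda$. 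As $\lambda>0$, this forces $p\le 2\alpha$, hence $p\le\lfloor 2\alpha\rfloor$ and $|S|=p+1\le\lfloor 2\alpha\rfloor+1$, as required.
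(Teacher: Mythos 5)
Your proof is correct and follows essentially the same route as the paper: both hinge on applying Lemma~\ref{lemma:uncrossing} to the family of source-minimal minimum $(S\setminus\{u_i\},\complement{U})$-terminal cuts, verifying $u_i\in A_i\setminus\bigcup_{j\neq i}A_j$ and $d(A_i)\le d(U)$ from the minimality of $S$, and then counting $d(Y_i)\ge\lambda$ over the resulting partition. The only differences are in bookkeeping: the paper constructs $S$ as a minimal transversal of the collection $\mathcal{C}=\{Q:\complement{U}\subsetneq Q\subsetneq V,\ d(Q)\le d(U)\}$ and obtains the size bound by contradiction (using one transversal element $u_p$ as the set $R$ in the lemma), whereas you construct $S$ as a deletion-minimal certificate containing a fixed root $r$ (which plays the role of $R$) and derive $p\le 2\alpha$ directly.
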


\begin{proof}
Consider the collection $\mathcal{C}:=\{Q\subsetneq V:\complement{U} \subsetneq Q ,d(Q)\leq d(U)\}$. Suppose $\mathcal{C}$ is empty. Let $S = \{x\}$ for some arbitrary $x \in U$ and let $(X,\complement{X})$ be a minimum $(S,\complement{U})$-terminal cut. Then,  $\complement{U} \subseteq \complement{X} \subsetneq V$ and $d(\complement{X}) \leq d(U)$. Since $\complement{X} \not\in \mathcal{C}$ (because $\mathcal{C}$ is empty), we must have $\complement{X} = \complement{U}$. Therefore, $(U, \complement{U})$ is the unique minimum $(S,\complement{U})$-terminal cut.

Next, suppose $\mathcal{C} \neq \emptyset$. Let $S\subseteq U$ be a minimal transversal of the collection $\mathcal{C}$, i.e., $S\subseteq U$ is a minimal set with $S\cap Q\neq\emptyset$ for all $Q\in\mathcal{C}$. Proposition \ref{prop: SU mincut} and Lemma \ref{lem: S size bound} complete the proof of Theorem \ref{thm: SU mincut}.
\end{proof}

\begin{proposition}\label{prop: SU mincut}
$(U,\complement{U})$ is the unique minimum $(S,\complement{U})$-terminal cut.
\end{proposition}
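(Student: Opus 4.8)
The plan is to show directly that every $(S,\complement{U})$-terminal cut other than $(U,\complement{U})$ has strictly larger cut value. Since $(U,\complement{U})$ is itself an $(S,\complement{U})$-terminal cut (as $S\subseteq U$) with value $d(U)$, establishing this strict inequality immediately yields that $(U,\complement{U})$ is the \emph{unique} minimum $(S,\complement{U})$-terminal cut.

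First I would fix an arbitrary $(S,\complement{U})$-terminal cut $(X,\complement{X})$ with $X\neq U$. By definition of a terminal cut, $S\subseteq X\subseteq V\setminus\complement{U}=U$, so in fact $S\subseteq X\subsetneq U$. Passing to complements, I would observe that $\complement{U}\subsetneq\complement{X}\subsetneq V$: the first strict inclusion holds because $X\subsetneq U$, and the second because $X\supseteq S\neq\emptyset$, where $S$ is nonempty since it is a transversal of the nonempty collection $\mathcal{C}$. Thus $\complement{X}$ already satisfies the containment condition in the definition of $\mathcal{C}$.

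The key step is to argue that nonetheless $\complement{X}\notin\mathcal{C}$. Since $S\subseteq X$ and $X\cap\complement{X}=\emptyset$, we have $S\cap\complement{X}=\emptyset$; because $S$ is a transversal of $\mathcal{C}$, every member of $\mathcal{C}$ meets $S$, so the disjoint set $\complement{X}$ cannot belong to $\mathcal{C}$. As $\complement{X}$ meets the containment requirement $\complement{U}\subsetneq\complement{X}\subsetneq V$, the only remaining way for $\complement{X}$ to fall outside $\mathcal{C}$ is that the value condition fails, i.e., $d(\complement{X})>d(U)$. Invoking symmetry of the cut function, $d(X)=d(\complement{X})>d(U)$, which is exactly the desired strict inequality.

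I do not anticipate a genuine obstacle here. The argument uses only that $S$ is a transversal of $\mathcal{C}$ contained in $U$; its minimality plays no role in this proposition and is reserved for the subsequent size bound in Lemma \ref{lem: S size bound}. The one point to handle with care is the bookkeeping of strict versus non-strict inclusions when passing between $X$ and its complement, together with the use of symmetry $d(X)=d(\complement{X})$ to read off the value of $X$ from the observation that $\complement{X}$ lies just outside $\mathcal{C}$.
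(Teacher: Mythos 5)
Your proof is correct and rests on exactly the same key observation as the paper's: for any $(S,\complement{U})$-terminal cut $(X,\complement{X})$ with $X\neq U$, the set $\complement{X}$ satisfies $\complement{U}\subsetneq\complement{X}\subsetneq V$ yet is disjoint from the transversal $S$, so it cannot lie in $\mathcal{C}$ and hence $d(X)=d(\complement{X})>d(U)$. The only cosmetic difference is that you conclude directly that every competitor is strictly more expensive, whereas the paper phrases the same argument as a contradiction against the source-minimal minimum $(S,\complement{U})$-terminal cut.
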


\begin{proof}
Suppose $(X,\complement{X})$ be the source minimal minimum $(S,\complement{U})$-terminal cut with $X\neq U$. Since $(U,\complement{U})$ is a $(S,\complement{U})$-terminal cut, we have $d(\complement{X})=d(X)\leq d(U)$. Moreover, we have $\emptyset\neq X\subsetneq U$, which implies that  $\complement{U}\subsetneq\complement{X}\subsetneq V$. Hence, we have $\complement{X}\in\mathcal{C}$, and thus $S\cap \complement{X}\neq\emptyset$. This contradicts with the assumption that $(X,\complement{X})$ is a $(S,\complement{U})$-terminal cut. Therefore, we must have $X=U$.
\end{proof}

\begin{lemma} \label{lem: S size bound}
$|S|\leq \lfloor 2\alpha\rfloor+1$.
\end{lemma}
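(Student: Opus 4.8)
The plan is to bound $p:=|S|$ by feeding a family of minimum terminal cuts, extracted from the minimal transversal $S=\{u_1,\dots,u_p\}$, into the uncrossing Lemma~\ref{lemma:uncrossing}. If $p\le 2$ the bound $p\le\lfloor 2\alpha\rfloor+1$ is immediate since $\alpha\ge 1$, so I assume $p\ge 3$. First I would exploit the minimality of the transversal: for each $i\in[p]$, the set $S\setminus\{u_i\}$ is not a transversal, so there is a witness $Q_i\in\mathcal{C}$ with $(S\setminus\{u_i\})\cap Q_i=\emptyset$; because $S$ itself is a transversal this forces $S\cap Q_i=\{u_i\}$, so $u_i\in Q_i$, $\complement{U}\subsetneq Q_i$, and $d(Q_i)\le d(U)$. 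The crucial bookkeeping choice is to \emph{sacrifice one element}: set $R:=\{u_p\}$ and apply Lemma~\ref{lemma:uncrossing} with the lemma's source set taken to be $\{u_1,\dots,u_{p-1}\}$ and its parameter equal to $p-1\ge 2$. With this choice $(S\cup R)\setminus\{u_i\}=S\setminus\{u_i\}$ for each $i\in[p-1]$, the feasibility conditions $\emptyset\ne R\subsetneq U$ and $\{u_1,\dots,u_{p-1}\}\subseteq U\setminus R$ hold, and $u_p\notin Q_i$ automatically (as $Q_i\cap S=\{u_i\}$). I would then take $A_i$ to be the \emph{source-minimal} (equivalently sink-maximal) minimum $(S\setminus\{u_i\},\complement{U})$-terminal cut, which is well defined by submodularity. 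This single sacrifice is exactly what produces the additive $+1$ in the final bound.

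Next I would verify the hypothesis $u_i\in A_i\setminus(\cup_{j\in[p-1]\setminus\{i\}}A_j)$ demanded by the lemma. The part $u_i\notin A_j$ for $j\ne i$ is automatic, since $u_i$ belongs to the source terminal set $S\setminus\{u_j\}$ and hence lies on the source side $\complement{A_j}$. The substantive claim, and the step I expect to be the main obstacle, is that $u_i\in A_i$, i.e.\ $u_i$ is grouped with $\complement{U}$. The difficulty is that the witness $Q_i$ places $u_i$ on the sink side but need not be a \emph{minimum} terminal cut, so it cannot be used as $A_i$ directly. I would instead argue by contradiction: suppose $u_i\notin A_i$, so $S\subseteq\complement{A_i}$ and $A_i\cap S=\emptyset$. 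If $A_i\supsetneq\complement{U}$, then $A_i$ is a proper superset of $\complement{U}$ with $d(A_i)\le d(U)$, hence $A_i\in\mathcal{C}$, contradicting that $S$ is a transversal of $\mathcal{C}$. If instead $A_i=\complement{U}$, then $d(A_i)=d(U)$, which together with $d(A_i)\le d(Q_i)\le d(U)$ forces $d(Q_i)=d(A_i)$; thus $Q_i$ is itself a minimum $(S\setminus\{u_i\},\complement{U})$-terminal cut with $Q_i\supsetneq\complement{U}=A_i$, contradicting the sink-maximality of $A_i$. Either way we reach a contradiction, so $u_i\in A_i$.

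Finally I would run the counting. Lemma~\ref{lemma:uncrossing} produces a partition $(Y_1,\dots,Y_{p-1},W,Z)$ of $V$ into $p+1$ non-empty parts with $\sigma(Y_1,\dots,Y_{p-1},W,Z)\le\min\{d(A_i)+d(A_j):i\ne j\}$. Since each $A_i$ is a feasible $(S\setminus\{u_i\},\complement{U})$-terminal cut (witnessed e.g.\ by $\complement{U}$ itself), we have $d(A_i)\le d(U)\le\alpha\lambda$, so the right-hand side is at most $2\alpha\lambda$. Combining inequality~\eqref{eq:sigma-lower-bound}, namely $\sum_{i\in[p-1]}d(Y_i)\le\sigma(Y_1,\dots,Y_{p-1},W,Z)$, with the observation that each non-empty proper part satisfies $d(Y_i)\ge\lambda$, I obtain $(p-1)\lambda\le 2\alpha\lambda$, hence $p-1\le 2\alpha$ and therefore $p\le\lfloor 2\alpha\rfloor+1$. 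The only genuinely delicate point in this plan is establishing $u_i\in A_i$; the remaining steps are feasibility bookkeeping and a one-line averaging argument.
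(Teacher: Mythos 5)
Your proposal is correct and follows essentially the same route as the paper: both extract witnesses from the minimality of the transversal, take $A_i$ to be the source-minimal minimum $(S\setminus\{u_i\},\complement{U})$-terminal cut, establish $u_i\in A_i$ by ruling out $A_i=\complement{U}$ via source-minimality and $A_i\supsetneq\complement{U}$ with $A_i\cap S=\emptyset$ via transversality, sacrifice $u_p$ as the set $R$, and apply Lemma~\ref{lemma:uncrossing} with parameter $p-1$ followed by an averaging step. The only difference is presentational (you argue directly that $p-1\le 2\alpha$ using $d(Y_i)\ge\lambda$, while the paper assumes $p\ge\lfloor 2\alpha\rfloor+2$ and exhibits a cut cheaper than $\lambda$), which is immaterial.
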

\begin{proof}

For the sake of contradiction, suppose that $|S|\geq\lfloor 2\alpha\rfloor+2$. 
Let $\lambda$ denote the value of a minimum cut in $G$. 
Our proof strategy is to arrive at a cut with value cheaper than $\lambda$, thus contradicting the definition of $\lambda$. 
For convenience, we will denote $p:=|S|$ and write $S=\{u_1,\ldots, u_p\}$. For each $i\in[p]$, let $(\complement{A_i},A_i)$ be the source minimal minimum $(S-u_i,\complement{U})$-terminal cut. 

\begin{claim}\label{claim:min-separating-cuts-miss-u_i}
For every $i\in[p]$, we have $d(A_i)\leq d(U)$ and $u_i\in A_i$.
\end{claim}

\begin{proof}
Let $i\in [p]$. Since $S$ is a minimal transversal of the collection $\mathcal{C}$, there exists a set $B_i\in \mathcal{C}$ such that $B_i\cap S = \{u_i\}$. Hence, $(\complement{B_i},B_i)$ is a $(S-u_i,\complement{U})$-terminal cut. Therefore, 
\[
d(A_i)\leq d(B_i)\leq d(U). 
\]

We will show that $A_i$ is in the collection $\mathcal{C}$. 
By definition, $A_i\subseteq V\setminus (S-u_i)\subsetneq V$ and $\complement{U}\subseteq A_i$. If $A_i=\complement{U}$, then the above inequalities are equations (since $d(A_i) = d(\complement{U})=d(U)$) implying that $(\complement{B_i}, B_i)$ is a minimum  $(S-u_i, \complement{U})$-terminal cut, and consequently, $(\complement{B_i}, B_i)$ contradicts source minimality of the minimum $(S-u_i, \complement{U})$-terminal cut $(\complement{A_i},A_i)$. Therefore, $\complement{U}\subsetneq A_i$. Hence, $A_i$ is in the collection $\mathcal{C}$. 

We recall that the set $S$ is a transversal for the collection $\mathcal{C}$ and moreover, none of the elements of $S-u_i$ are in $A_i$  by definition of $A_i$. Therefore, the vertex $u_i$ must be in $A_i$.  
\end{proof}

We recall our assumption that $p\geq\lfloor 2\alpha\rfloor+2$. 
We note that $p-1>2\alpha \ge 2$. Using Claim \ref{claim:min-separating-cuts-miss-u_i}, we observe that the sets $U$, $R:=\{u_p\}$, $S=\set{u_1,\ldots,u_{p-1}}$, and partitions $(\complement{A_i}, A_i)$ for $i\in [p-1]$ satisfy the conditions of Lemma \ref{lemma:uncrossing}. Therefore, applying Lemma \ref{lemma:uncrossing} shows the existence of a partition $(Y_1, \ldots, Y_{p-1},W,Z)$ of $V$ into $p+1$ non-empty parts such that 
\begin{align*}
\sigma(Y_1, \ldots, Y_{p-1},W,Z)
&\leq\min\set{\deltacard(A_i) + \deltacard(A_j):i, j\in [p-1], i\neq j}
\le 2\deltacard(U)
\leq 2\alpha\lambda.
\end{align*}
By inequality \eqref{eq:sigma-lower-bound}, this implies that 
\begin{align*}
    \sum_{i\in[p-1]}d(Y_i)\leq 2\alpha\lambda.
\end{align*}
Therefore, there exists $i\in[p-1]$ such that
\begin{align*}
    d(Y_i)\leq \frac{2\alpha\lambda}{p-1}<\lambda.
\end{align*}
The last inequality is because $p-1>2\alpha$ and $\lambda>0$ since the graph is connected. Thus, we have a cut $(Y_i, \complement{Y_i})$ with cut value smaller than $\lambda$, a contradiction to the definition of $\lambda$. 
\end{proof}

Applying Theorem \ref{thm: SU mincut} to $(\complement{U}, U)$ yields the following corollary.

\begin{corollary}\label{coro: main}
Let $G=(V,E)$ be a connected graph with positive edge costs and $(U, \complement{U})$ be an $\alpha$-approximate minimum cut for some $\alpha \ge 1$.
Then, there exists a subset $T\subseteq \complement{U}$ with $|T|\leq \lfloor 2\alpha\rfloor+1$ such that $(U,\complement{U})$ is the unique minimum $(U,T)$-terminal cut.
\end{corollary}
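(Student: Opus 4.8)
The plan is to obtain the corollary as an immediate consequence of Theorem \ref{thm: SU mincut} by exploiting the symmetry of the cut function, as the sentence preceding the statement already hints. The key observation is that the cut $(\complement{U}, U)$ is itself an $\alpha$-approximate minimum cut: since $d(\complement{U}) = d(U) \le \alpha\lambda$ by symmetry of the graph cut function, the pair $(\complement{U}, U)$ satisfies the hypothesis of Theorem \ref{thm: SU mincut} with the roles of source and sink interchanged.

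First I would apply Theorem \ref{thm: SU mincut} to the cut $(\complement{U}, U)$, treating $\complement{U}$ in the role of the source set (the ``$U$'' of the theorem) and $U$ in the role of the sink set (the ``$\complement{U}$'' of the theorem). The theorem then supplies a subset $T \subseteq \complement{U}$ with $|T| \le \lfloor 2\alpha \rfloor + 1$ such that $(\complement{U}, U)$ is the unique minimum $(T, U)$-terminal cut. This $T$ is exactly the set claimed in the corollary, and it has the required size; it remains only to translate the uniqueness statement about $(T,U)$-terminal cuts into one about $(U,T)$-terminal cuts.

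The translation step is a complementation bijection. A cut $(X, \complement{X})$ is a $(U,T)$-terminal cut precisely when $U \subseteq X$ and $T \subseteq \complement{X}$, which holds if and only if its reverse $(\complement{X}, X)$ is a $(T,U)$-terminal cut. Since the cut value is invariant under complementation, namely $d(X) = d(\complement{X})$, this bijection identifies minimum $(U,T)$-terminal cuts with minimum $(T,U)$-terminal cuts and preserves uniqueness. Applying it to the unique minimum $(T,U)$-terminal cut $(\complement{U}, U)$ shows that $(U, \complement{U})$ is the unique minimum $(U,T)$-terminal cut, as desired.

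Given the directness of this reduction, there is no genuine obstacle here; the only point requiring care is the bookkeeping of which side is the source and which is the sink when invoking Theorem \ref{thm: SU mincut}, so that the subset produced lands in $\complement{U}$ rather than $U$ and the final terminal cut is oriented as $(U,T)$ rather than $(T,U)$.
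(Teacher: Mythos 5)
Your proof is correct and matches the paper's approach exactly: the paper also obtains the corollary by applying Theorem \ref{thm: SU mincut} to the complementary cut $(\complement{U}, U)$, with the complementation bijection between $(T,U)$- and $(U,T)$-terminal cuts left implicit. Your careful spelling-out of that bookkeeping step is a faithful elaboration, not a deviation.
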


We now restate Theorem \ref{thm: main} and prove it using Theorem \ref{thm: SU mincut} and Corollary \ref{coro: main}.

\thmStructure*
\begin{proof}
By Theorem \ref{thm: SU mincut}, there exists a subset $S\subseteq U$ with $|S|\leq \lfloor 2\alpha\rfloor+1$ such that $(U,\complement{U})$ is the unique minimum $(S,\complement{U})$-terminal cut. By Corollary \ref{coro: main}, there exists a subset $T\subseteq \complement{U}$ with $|T|\leq \lfloor 2\alpha\rfloor+1$ such that $(U,\complement{U})$ is the unique minimum $(U,T)$-terminal cut. For these choices of subsets $S$ and $T$, we now show that 
$(U,\complement{U})$ is the unique minimum $(S,T)$-terminal cut.

Let $(Y, \complement{Y})$ be an arbitrary minimum $(S,T)$-terminal cut. It suffices to show that $Y=U$. 
Since $(U, \complement{U})$ is a $(S,T)$-terminal cut, we have that
\[
d(Y) \le d(U). 
\]
Since $(Y\cap U, \complement{Y\cap U})$ is a $(S, \complement{U})$-terminal cut and $(U, \complement{U})$ is a minimum $(S, \complement{U})$-terminal cut, we have that 
\[
d(U)\le d(Y\cap U) . 
\]
Since $(Y\cup U, \complement{Y\cup U})$ is a $(U, T)$-terminal cut and $(U, \complement{U})$ is a minimum $(U, T)$-terminal cut, we have that 
\[
d(U) \le d(Y\cup U). 
\]
Using the above inequalities in conjunction with the submodularity of the cut function, we obtain that 
\[
2d(U)\le d(U\cap Y) + d(U\cup Y) \le d(U)+d(Y) \le 2 d(U).
\]
Hence, all of the above inequalities should be equations. Consequently, 
$(Y\cap U, \complement{Y\cap U})$ is a minimum $(S, \complement{U})$-terminal cut, and  
$(Y\cup U, \complement{Y\cup U})$ is a minimum $(U, T)$-terminal cut. This implies that $Y\cap U= U$ since $(U, \complement{U})$ is the unique minimum $(S, \complement{U})$-terminal cut and $Y\cup U=U$ since $(U, \complement{U})$ is the unique minimum $(U, T)$-terminal cut. Thus, we have that $Y=U$. 
\end{proof}

\section{An Open Problem}
The fundamental structural result that the number of $\alpha$-approximate minimum cuts in a $n$-vertex connected graph is at most $n^{O(\alpha)}$ has been extended to $r$-rank hypergraphs (the \emph{rank} of a hypergraph is the size of the largest hyperedge): Kogan and Krauthgamer \cite{KK15} used the random contraction technique to show that the 
number of $\alpha$-approximate minimum cuts in a $r$-rank $n$-vertex connected hypergraph is $O(2^{\alpha r}n^{2\alpha})$ and they can all be enumerated in randomized polynomial time for constant $\alpha$ and $r$. It would be interesting to give a \emph{deterministic} proof of this bound. 

\bibliographystyle{amsplain}
\bibliography{references}

\end{document}